\newtheorem{theorem}{Theorem}
\theoremstyle{definition}
\theoremstyle{remark}
\begin{document}
\title{Quantifying Hosting Capacity for Rooftop PV System in LV Distribution Grids\vspace{-3.5mm}}
\author{Jingyi~Yuan,~\IEEEmembership{Student~Member,~IEEE,}
Yang~Weng,~\IEEEmembership{Member,~IEEE,}
and~Chin-Woo~Tan,~\IEEEmembership{Member,~IEEE}\vspace{-12mm}
}
\maketitle
\vspace{-6mm}
\begin{abstract}
Power systems face increasing challenges on reliable operations due to the widespread distributed generators (DGs), e.g., rooftop PV system in the low voltage (LV) distribution grids.
Characterizing the hosting capacity (HC) is vital for assessing the total amount of distributed generations that a grid can hold before upgrading. 
For analyzing HC, some methods conduct extensive simulations, lacking theoretical guarantees and can time-consuming. Therefore, there are also methods employing optimization over all necessary operation constraints. But, the complexity and inherent non-convexity lead to non-optimal solutions.
To solve these problems, 
this paper provides a constructive model for HC determination.
Based on geometrically obtained globally optimal HC, 
we construct HC solutions sequentially according to realistic constraints, so that we can obtain optimal solution even with non-convex model. For practical adaption, we also consider 
three-phase unbalance condition and  parallel computation to speed-up. We validated our method by extensive numerical results on IEEE standard systems, such as $8$-bus and $123$-bus radial distribution grids, where the global optimums are obtained and performance illustrations are demonstrated.
\vspace{-4mm}
\end{abstract}
\IEEEpeerreviewmaketitle
\vspace{-2mm}
\section{Introduction}
\vspace{-1mm}
\IEEEPARstart{I}{n} recent years, an increasing number of renewable energy-based distributed 
generators (DGs) appears in power distribution grids. Such widespread use of DGs brings great benefits, e.g., voltage profile support, loss reduction, and lower capital cost. But, such change also raises operational challenges, like power quality problems and system control issues \cite{E2012DGimpact}. 
For example, high-level DG penetration can cause undesirable voltage flicker or equipment aging due to the excessive voltage regulating behaviors \cite{p2000determining, G2011impact}. To avoid these problems, evaluating the hosting capacity of a system is vital to understand the impact of DG penetration level and conduct upgrades for reliable system operation. 

Hosting capacity (HC) is defined as the maximum amount of the power generation that a system can host without violating any operating standards \cite{JSmith2012EPRI3}. 
With such a definition, Electric Power Research Institute (EPRI) proposes simulation-based methods, modeling each feeder and examine all the power quality and reliability issues with screening tools to determine HC at different locations 
\cite{M2016streamlined,l2016utility,n2016determination}. For example, we can keep on increasing the PV (photovoltaics) penetration level for power flow analysis until constraint violation, where the last feasible PV generation is regarded as the HC. 
Such idea has already been deployed in commercial software. For example, our collaboration partner, CYME, focuses on the hosting capacity of one node per analysis \cite{CYME}. 
Fig. $1$ shows the three steps how CYME obtain the per-bus hosting capacity. 
However, no matter if one is focusing on one bus or for all buses, 
it is numerically impossible to exhaust all possible scenarios with simulations. 
\begin{figure}
\centering
\includegraphics[width=3.5in]{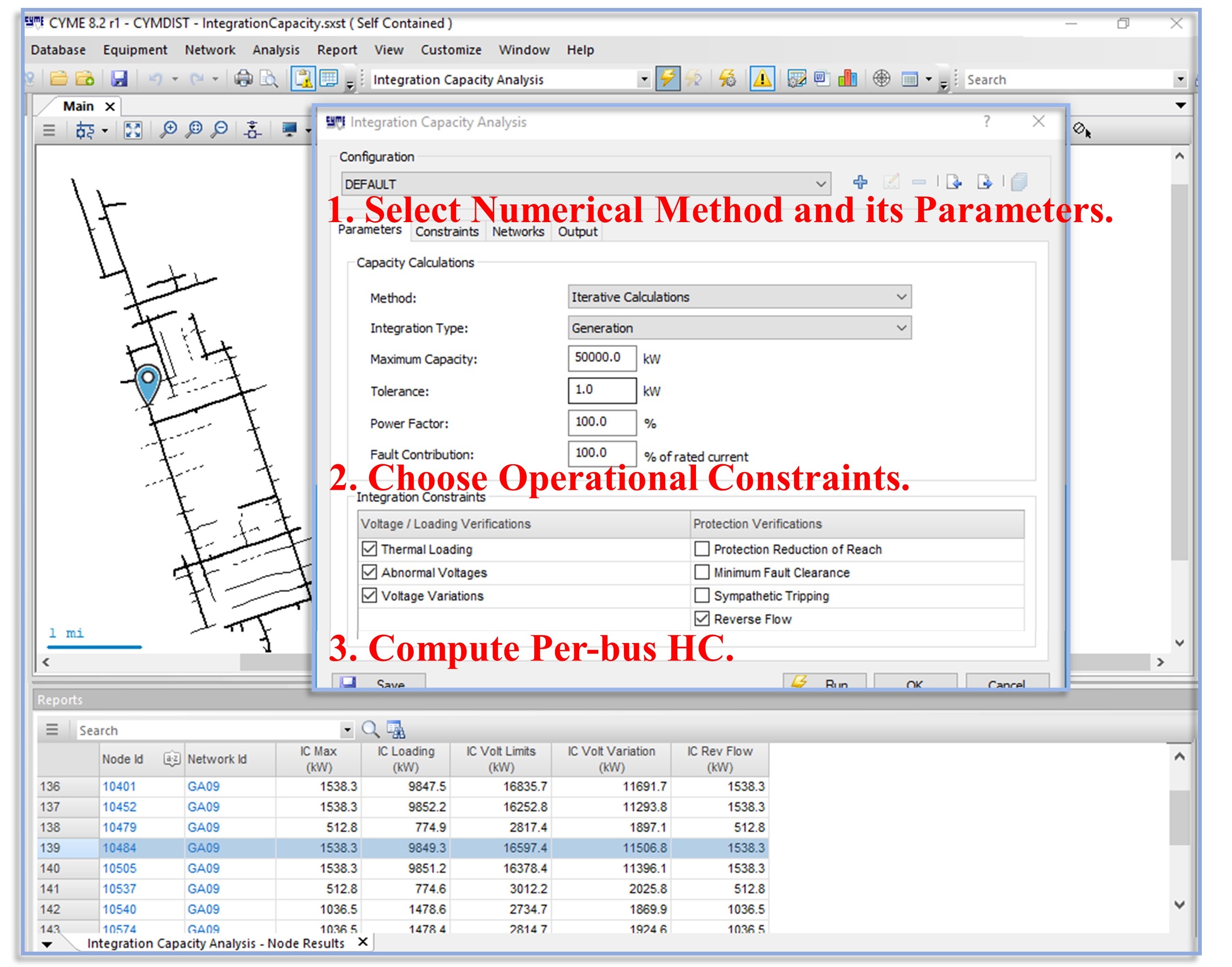}
\label{figure_4_wire_model}
\vspace{-2mm}
\caption{CYME for numerical hosting capacity analysis.} 
\vspace{-4mm}
\end{figure}

Therefore, there are also mathematical model-based methods. 
For example, \cite{S2002maximising} formulates HC problem as an AC optimal power flow 
(ACOPF) problem, extended by  \cite{L2010distribution,f2015assessing} for a multi-period ACOPF. As solving ACOPF with many constraints are hard, \cite{F2017evaluation,A2015understanding6} propose to conduct sensitivity analysis as an approximation to characterize key factors, e.g., voltages, the network topology, load size, voltage limits, and generation locations \cite{S2002maximising,k2014locational,c2005voltage}. There also recent study with a probabilistic model on key factors \cite{abad2018probabilistic}.

Based on these factors, additional control strategies can improve HC via $1$) on-load tap-changers (OLTC) technology \cite{d2016increasing},
$2$) reactive power control \cite{n2011increasing,divshali2017improving,j2015improving}, 
$3$) active network management \cite{L2010distribution}, 
$4$) static and dynamic network re-configurations \cite{f2015assessing}, 
and $5$) adding soft open points from 
power-electronic devices \cite{l2016assessing}, and $6$) incorporating real-time 
information \cite{S2002maximising}. 
Besides, one can also model economic aspects to do a cost-benefit analysis for maximizing HC \cite{s2014optimizing}. 

However, no matter how to approximate, e.g., selecting key features, the aforementioned studies can not locate the globally optimal solution. 
The objective in these methods is also inflexible to reduce optimizing variable numbers. 
Therefore, this paper focuses on obtaining exact solutions constructively with a flexible objective for unevenly deployed DGs. 

The first contribution is changing the objective in ACOPF to a flexible form and sequentially adjusting solution with constraints to preserve convexity as much as possible.
For example, we transform into rectangular coordinates to convexify certain constraints. 
When convexification is impossible, our second contribution lies in connecting geometric understanding for reaching a global optimum even when the constraint is still non-convex. A detailed proof is provided. 
Our third contribution is for providing practical adjustments, where we extend
our static model with sequence components for 
three-phase unbalance situation. We also provide a method for paralleling computation  for large systems. 

Extensive simulations verify the performance of the proposed method on IEEE distribution networks, e.g., $8$-bus and $123$-bus. The results show that our problem formulation and proposed theory can find exact HC in diversified scenarios. It also validates the integrated methods for multi-phase unbalanced situation and the distributed-optimization for computational reduction.

The rest of the paper is organized as follows: Section \uppercase\expandafter{\romannumeral2} shows the problem formulation. Section \uppercase\expandafter{\romannumeral3} proposes a 
generalized theory of maximum HC solution based on 
geometric understanding and the proof. Section \uppercase\expandafter{\romannumeral4} shows the algorithm adjustments for three-phase unbalanced system and parallel computation. Section \uppercase\expandafter{\romannumeral5} tests the theory in various systems and Section \uppercase\expandafter{\romannumeral6} concludes the paper.
\vspace{-4mm}
\section{Mathematical Remodeling}
The power flow equations are
\vspace{-1mm}
\begin{subequations}  \label{eq:1}
\begin{align}
 \vspace{-2mm}
P_{i}^{inj}=\sum_{k=1}^n {|V_{i}|}{|V_{k}|}(G_{ik}\cos\theta_{ik}+B_{ik}\sin\theta_{ik}), \label{eq:1a}
\\ \vspace{-2mm}
Q_{i}^{inj}=\sum_{k=1}^n {|V_{i}|}{|V_{k}|}(G_{ik}\cos\theta_{ik}-B_{ik}\sin\theta_{ik}),\label{eq:1b}
\vspace{-1mm}
\end{align} 
\end{subequations}
where $n$ is the bus number of the electrical system, $P_{i}^{inj}$ is the active power injection at bus
$i$, $V_{i}$ is the voltage magnitude at bus $i$, and $G_{ik}+jB_{ik}$ is the element in the bus admittance matrix $Y_{\textrm{Bus}}$ $\in \mathbb{C^{\mathit{n}\times \mathit{n}}}$. Specifically, $G_{ik}$ represents the line admittance between bus $i$ and bus $k$.

For flexible objective, we include a free parameter $\lambda$ in front of each potential generator. 
 \vspace{-1mm}
$$
HC = \sum_{i=1}^n \lambda_i \cdot P_{i}^{inj} = \sum_{i=1}^n \lambda_i \cdot (P_{i}^{gen}-P_{L}^{load}),  \vspace{-1mm}
$$ 
Here, $\lambda$ can be a generating scaling factor, indicating how fast a linearly increas with ``base'' generation can be. 
For example, $\lambda_1=\lambda_2=\cdots=\lambda_n$ are widely assumed in literature for simplification, where we are looking into an ideal and even growth for situational awareness. 
However, practical setup won't be homogeneous in general, 
therefore, our model provide such a flexibility. 
\begin{itemize}
\item when $\lambda_{i}$ is a binary parameter, $\lambda_{i}=1$ indicates bus $i$ has solor generation and $\lambda_{i}=0$ vise verse;
\item when $\lambda_{i}$ is a non-negative parameter, e.g., $\lambda_{i} \in \mathbb{R+}$, the weight parameter represents the importance of a bus, e.g., the relative price of installing solar panels.
\end{itemize}
In this paper, we will first let $\lambda_{i}=1, \forall i$ for simplification and focus on the variable $P_{i}$ for geometric insights, e.g., 
\vspace{-1mm}
$$ 
\sum_{i=1}^n \lambda_{i} \cdot P_{i}^{inj} = \sum_{i=1}^n  P_{i}^{inj}.
$$
Then, we will remove such an assumption for generality. 

Operational constraints must be satisfied while calculating hosting capacity.
\begin{subequations} 
\label{eq:2}
\begin{align}
\max &\sum_{i=1}^n  \lambda_i\cdot P_{i}^{inj} \label{eq:2a}\\
\rm{s.t.} \
 & \text{Equ.} \ \eqref{eq:1}, 
 \label{eq:2b}\\
&V_{min}\leq |V_{i}| \leq V_{max},\label{eq:2c}\\
&\theta_{min}\leq \theta_{ik} \leq \theta_{max},\label{eq:2d}\\
&I_{Lmin}\leq I_{ik}\leq I_{Lmax},\label{eq:2e}\\
&pf_i\geq \eta,\label{eq:2f}
 \vspace{-2mm}
\end{align}
\end{subequations}
where 
\eqref{eq:2b} is the power-flow constraint. \eqref{eq:2c} and \eqref{eq:2d} state the constraints over voltage magnitudes 
and the voltage angle difference. 
\eqref{eq:2e} represents the thermal limit. 
\eqref{eq:2f} is the power factor requirement of generator bus, where 
$pf_i=\frac{(P^{inj}_i)^2}{(P^{inj_i})^2+(Q^{inj_i})^2}$.  
  \vspace{-1.5mm}
\section{Finding the Global Optimum of Hosting Capacity based on Geometrical Understanding }
\subsection{Geometrical Intuition for Obtaining Hosting Capacity}
It is challenging to obtain global optimum in \ref{eq:2}, due to the nonlinear correlations of the optimization variables in power flow equation, let alone the non-convex constraints.
However, for a small system such as a $3$-bus system, hosting capacity can give geometrical results and intuition, shown in Fig. $2$. 

For example, we can let bus $0$ be the reference bus with voltage $1\angle0^\circ$. Let bus $1$, $2$ be PV-type buses and let the branch resistance be $R=1$. For simplicity, let the voltage angle differences and line reactances be zeros.
Based on the setup, we show below two geometric intuitions to find the global optimum. 

\vspace{-1mm}
     \begin{figure}[H]
\vspace{-2mm}     
         \centering
        \includegraphics[width=3in]{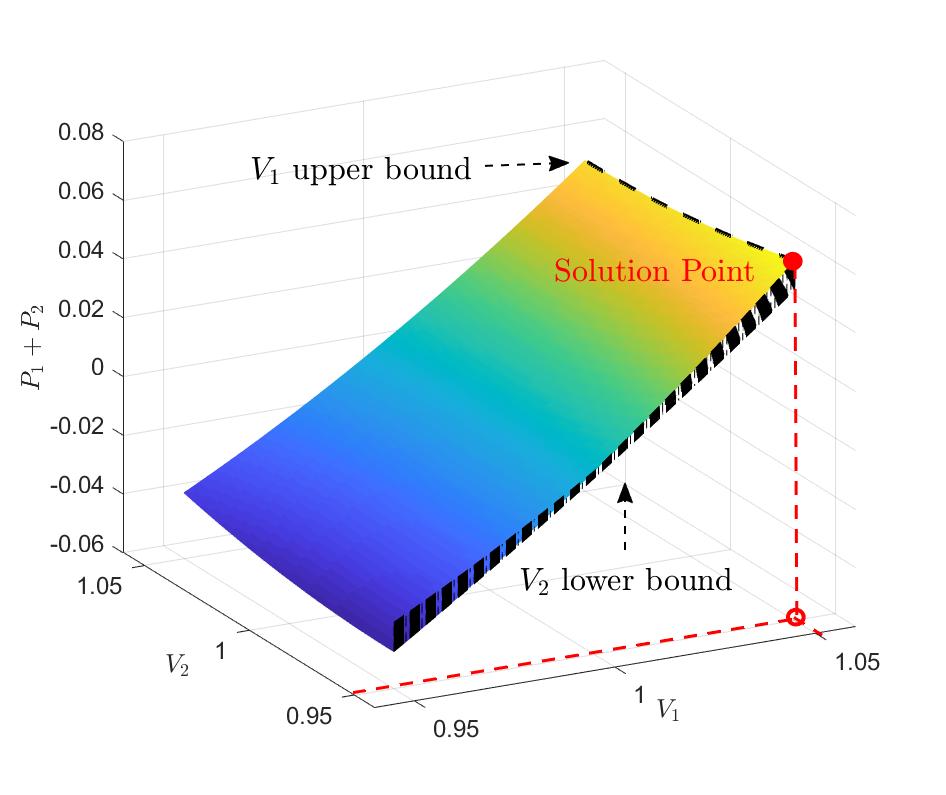}
          \vspace{-2mm}
         \caption{A $3$D plot of $P_{1}+P_{2}$, $V_{1}$, and $V_{2}$ that illustrates the P-V (total power-individual voltage) relevancy for the $3$-bus network. The value range of both $V_{1}$ and $V_{2}$ respect to the total power makes it a surface. The solid red point denotes the operation point with maximum generation where $V_{1}$ is at upper bound and $V_{2}$ is at lower bound. }
         \label{fig:3d}
     \end{figure}

For power system analysis, nodal P-V (power-voltage) curve is commonly used which is a 2-D plot.
To find the HC due to variable correlations, we extend to a 3-D P-V plot in Fig. 2a.
In the $3$-bus visualization, the only voltage constraint ($ 0.95\leq|V|\leq 1.05$) leads to the feasible solution region.
If $V_2$ is fixed, we observe that the curve monotonically increases as $V_1$ increases, which leads to finding 
the possible solution at the top edge. Then we find the maximum generation capacity at one vertex where $V_2$ is at lower bound. As marked by a red point, the solution of HC is obtained when  $V_1=1.05 \ p.u.$ is the highest and $V_2=0.95 \ p.u.$ is the lowest.
Such a solution pattern indicates hosting capacity be in direct proportion to the
difference of the voltage magnitudes over the neighbor buses. 
          \vspace{-2mm}
\begin{figure}[H]
          \vspace{-2mm}
         \centering
         \includegraphics[width=3in]{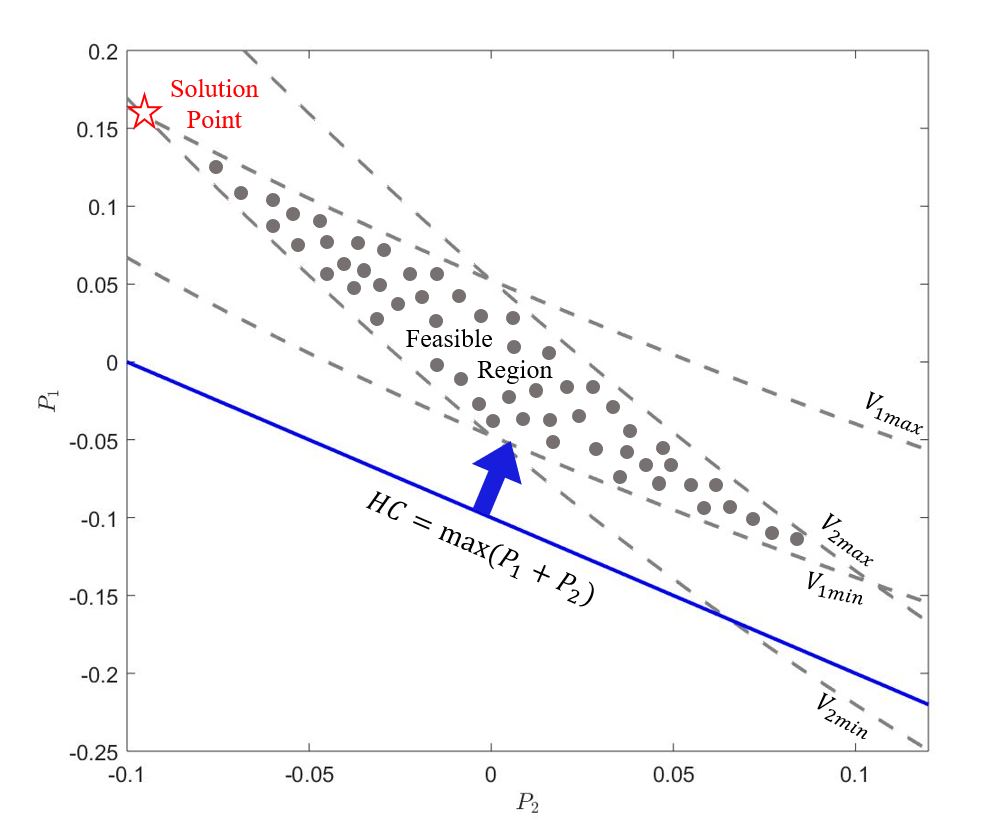}
          \vspace{-1mm}
         \caption{The plot of $P_{1} \ vs. \  P_{2}$. The voltage constraints bound a feasible region under power flow equations. We use a straight line: $HC= \max \ (P_{1}+P_{2})$ to cut the maximum value. The same solution point is found considering pairwise power correlation. It denotes that bus $1$ generates $0.1575 \ p.u.$ power while bus $2$ consumes $0.095 \ 
         p.u.$  power.}
         \label{fig:2d}
          \vspace{-2mm}
     \end{figure}
      \vspace{-2mm}
In addition to voltage correlation, it is important to validate voltage solution more directly in the pairwise power domain to understand how voltage constraints impact the optimum solution. For this purpose, Fig. 2b is created by firstly plotting all possible $(P_1,P_2)$ pairs as the gray dots according to power flow equations. Secondly, we add the voltage upper bounds and lower bounds of bus $2$ and bus $3$ to eliminate points outside the allowed region. Then, we push the line of $P_1+P_2$ towards bigger values until we reach the global optimum. 
The red star in Fig. 2b confirms that the same optimal HC is reached when the neighbor buses' voltages are 
with the pattern of one high and one low. The positive value of $P_1$ indicates generation while the negative value of $P_2$ means it absorbs power from bus $1$ and hence the total is the maximum. Such hosting capacity take neighbour-bus correlation into consideration that allows all possible patterns for different buses. 

We assign relative numbers to
represent the bus order, e.g., bus 1 is odd and bus 2 is even. 
To observe mathematically, the opposite
but extreme values for neighbor buses attain individual maximization for the quadratic functional component in
power flow equations. The conclusion is feasible in both resistive and general inductive network.
Based on the geometric summary, we can infer the general theory to the solution and the condition to achieve it next.  

\vspace{-3mm}
\subsection{Mathematical Theorem for Global Optimum Solution}
\subsubsection{HC due to Voltage Magnitude Constraints}
Based on the geometric understanding on voltage magnitudes bounds, we 
can generalize the results from $3$-bus to $n$-bus radial network 
high-low-voltage pattern.
\begin{theorem}\label{th1}
Let a radial distribution network be with resistive or inductive line impedance and let the voltage magnitude constraint be the only constraint on HC. The maximum power generation in ($3$), $HC^*= HC_{max}$, is obtained when $|V_{odd}|=V_{max}$ and $|V_{even}|=V_{min}$. 
\end{theorem}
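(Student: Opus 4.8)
The plan is to work in rectangular-like coordinates and exploit the radial structure so that the objective $\sum_i P_i^{inj}$ decomposes into a sum of branch contributions, each of which is a quadratic that is individually extremized by an extreme, alternating voltage pattern. First I would use the assumption that only the voltage magnitude constraint is active, so line-current, angle, and power-factor constraints in \eqref{eq:2e}--\eqref{eq:2f} can be dropped; and, following the geometric setup of Section III-A, reduce line angle differences to the small-angle regime so that $\cos\theta_{ik}\approx 1$ and the coupling term in \eqref{eq:1a}--\eqref{eq:1b} is dominated by the magnitude product $|V_i||V_k|$. Summing \eqref{eq:1a} over all buses and using $\sum_k G_{ik}=0$ for the off-diagonal/diagonal cancellation inside each row of $Y_{\mathrm{Bus}}$ (the shunt-free radial case), I would rewrite $HC=\sum_i P_i^{inj}$ as $\sum_{(i,k)\in\mathcal{E}} g_{ik}\,(|V_i|-|V_k|)^2$ up to sign, i.e. a nonnegative (for purely resistive lines) combination of squared magnitude differences over the edges of the tree.

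Next, with $HC$ expressed as $\sum_{(i,k)\in\mathcal E} g_{ik}(|V_i|-|V_k|)^2$, each summand is convex and increasing in $\bigl||V_i|-|V_k|\bigr|$, so on the box $[V_{min},V_{max}]^n$ the maximum of each term is attained only at a vertex with $\{|V_i|,|V_k|\}=\{V_{min},V_{max}\}$. The key combinatorial step is that a tree is bipartite: I would 2-color the buses as ``odd'' and ``even'' (the relative bus-order labeling introduced just before the theorem), so that every edge joins an odd bus to an even bus. Assigning $|V_{odd}|=V_{max}$ and $|V_{even}|=V_{min}$ makes $\bigl||V_i|-|V_k|\bigr|=V_{max}-V_{min}$ simultaneously on \emph{every} edge, hence every summand hits its individual upper bound at once; since an upper bound met termwise is the global maximum, this vertex is optimal and $HC^*=HC_{max}=(V_{max}-V_{min})^2\sum_{(i,k)\in\mathcal E} g_{ik}$. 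For the inductive case I would repeat the argument with $g_{ik}$ replaced by the appropriate susceptance-derived coefficient (or argue via the reactive equation \eqref{eq:1b} and the monotonicity already observed in the 3-D P-V surface of Fig.~2a), noting the sign bookkeeping is the only thing that changes.

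The main obstacle I anticipate is making the decomposition $HC=\sum_{(i,k)}g_{ik}(|V_i|-|V_k|)^2$ rigorous without the small-angle simplification: in general $\theta_{ik}\neq 0$, and the cross terms $|V_i||V_k|(G_{ik}\cos\theta_{ik}+B_{ik}\sin\theta_{ik})$ do not collapse into a clean squared difference, so I would either (i) restrict to the lossless-angle / flat-angle operating regime explicitly invoked in the geometric discussion, or (ii) show that the angle variables enter only through terms that are maximized independently and do not disturb the vertex structure in $|V|$ — essentially a separability/monotonicity argument that the partial derivative of $HC$ with respect to each $|V_i|$ keeps a constant sign on the feasible set, forcing each $|V_i|$ to a bound, and then that the bipartite assignment is the unique consistent choice of bounds that maximizes every edge term. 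A secondary subtlety is confirming feasibility: one must check that the alternating vertex $(|V_{odd}|,|V_{even}|)=(V_{max},V_{min})$ is actually realizable by \emph{some} choice of the remaining free variables (injections, angles) consistent with \eqref{eq:1}, which holds because on a radial network the power-flow map from voltage profiles to injections is onto in the relevant sense, so no additional constraint is violated.
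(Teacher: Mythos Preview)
Your proposal is correct and follows essentially the same route as the paper: rewrite $\sum_i P_i$ as an edge-wise sum $\sum_{(i,k)} |G_{ik}|\,(|V_i|-|V_k|)^2$ and then observe that the alternating high--low assignment maximizes every summand simultaneously. Your explicit invocation of tree bipartiteness and the feasibility check are in fact more careful than the paper's own argument; and for the ``inductive'' case note that in the rectangular decomposition $\sum_i P_i=\sum_{i,k}(-G_{ik})\bigl[(V_{i,\mathrm{re}}-V_{k,\mathrm{re}})^2+(V_{i,\mathrm{im}}-V_{k,\mathrm{im}})^2\bigr]$ the $B_{ik}$ terms cancel identically, so no separate treatment is needed.
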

For the proof, it is sufficient to show that the proposed hosting value $HC^{*}$ is larger than arbitrary summation of power generations, e.g., 
\vspace{-2mm}
\begin{align}
HC^{*}=\sum_{i=1}^n P_{i,\rm{solution}} \geq \sum_{i=1}^n {P_{i}}, \label{proof}
\vspace{-2mm}
\end{align}
\begin{proof}
See Appendix B.
\end{proof}

\subsubsection{HC due to Voltage Angle Constraints}
We add constraints on the angle difference, leading to complex voltage phase.
\begin{theorem}
Let $\theta_{min}=0\leq \theta_{i} \leq \theta_{max}$.
The HC is given by 
\begin{align*}
&\theta_{ik}=\min[\pi,\theta_{max}], |V|=\\
&\begin{cases}
V_{max} \ \ \textrm{for all buses},\ \textrm{if} \ \theta_{max}>\arccos(\frac{V_{max}+V_{min}}{2V_{max}}); \\
\begin{cases}
V_{max} \ \ \textrm{for bus} \ 2n-1,\ 
\\
V_{min}\ \ \textrm{for bus} \ 2n,\ \textbf{if}\ 0 <\theta_{max}\leq \arccos(\frac{V_{max}+V_{min}}{2V_{max}}),\\
\end{cases}
\end{cases}
\end{align*}
where $\arccos(\frac{V_{max}+V_{min}}{2V_{max}})$ is the critical value of 
the voltage angle boundary that changes the solution pattern. 
\end{theorem}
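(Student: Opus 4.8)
The plan is to make the objective separable over the edges of the radial tree --- exactly as in the proof of Theorem~\ref{th1} (Appendix~B) --- but now keeping the branch phase differences $\theta_{ik}$ as free variables instead of fixing them to zero. Substituting the power flow equations~\eqref{eq:1} into $HC=\sum_{i\neq0}P_{i}^{inj}$ and collapsing the double sum onto edges, each interior branch contributes only its two endpoint terms, whose susceptance parts cancel because $B_{ki}=B_{ik}$ while $\theta_{ki}=-\theta_{ik}$; writing $g_{ik}=-G_{ik}>0$ for the series conductance this gives
\begin{align*}
HC=\sum_{(i,k)\ \mathrm{interior}} g_{ik}\bigl(|V_i|^2+|V_k|^2-2|V_i||V_k|\cos\theta_{ik}\bigr)+\sum_{k:(0,k)\in E} t_k ,
\end{align*}
where the substation-adjacent terms $t_k=g_{0k}|V_k|^2-g_{0k}|V_k|\cos\theta_k+|b_{0k}||V_k|\sin\theta_k$ retain a $\sin\theta$ piece and $\theta_k$ is measured from the $1\angle0^{\circ}$ reference. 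This identity only uses $g_{ik}>0$, so it covers both purely resistive and general $R$--$X$ (``inductive'') lines, and together with~\eqref{eq:2c}--\eqref{eq:2d} it describes the whole feasible set here.

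I would then optimize branch by branch. For any fixed magnitudes the interior terms are (strictly) increasing in each $\theta_{ik}$ on $[0,\pi]$ through $-2|V_i||V_k|\cos\theta_{ik}$, and $\partial t_k/\partial\theta_k=|V_k|\bigl(g_{0k}\sin\theta_k+|b_{0k}|\cos\theta_k\bigr)\ge0$ on the physically relevant angle band, so the uniform assignment $\theta_{ik}=\min[\pi,\theta_{max}]=:\theta^\star$ maximizes every term and is feasible since all branch differences can be set to $\theta^\star$ at once. With $\theta=\theta^\star$ fixed, each interior term is a quadratic in $(|V_i|,|V_k|)$ that is convex in each argument, hence maximized at a corner of $[V_{min},V_{max}]^2$; comparing the ``both high'' value $2V_{max}^2(1-\cos\theta^\star)$ against the ``high--low'' value $V_{max}^2+V_{min}^2-2V_{max}V_{min}\cos\theta^\star$ and dividing by $V_{max}-V_{min}>0$ shows ``both high'' wins precisely when $V_{max}+V_{min}>2V_{max}\cos\theta^\star$, i.e. when $\theta^\star>\arccos\frac{V_{max}+V_{min}}{2V_{max}}$ --- the advertised critical angle. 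An analogous but easier comparison for $t_k$ (positive leading coefficient, and $V_{max}+V_{min}-\cos\theta^\star>0$) shows the substation-adjacent bus always selects $V_{max}$.

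Finally I would assemble the global maximizer. Above the threshold every corner choice is ``both high'', so $|V|=V_{max}$ at all buses; below it every interior branch wants opposite extremes, which is globally consistent because a tree is bipartite, and since the colour class touching the substation is pinned to $V_{max}$ one obtains $|V_{\mathrm{odd}}|=V_{max},\ |V_{\mathrm{even}}|=V_{min}$ along the feeder. Because each branch term is individually maximized, the resulting value dominates $\sum_{i}P_{i}^{inj}$ for every feasible voltage profile, which is exactly the inequality~\eqref{proof} underlying the Theorem~\ref{th1}-style argument. The main obstacle is not the per-branch calculus but the two ``gluing'' points: (i) checking that the per-branch magnitude optima are simultaneously realizable at branch vertices, settled by bipartiteness; and --- more delicately --- (ii) the asymmetric term $t_k$, whose monotonicity in $\theta_k$ fails past $\pi-\arctan(|b_{0k}|/g_{0k})$, so that stating the optimal angle cleanly as $\min[\pi,\theta_{max}]$ really requires $\theta_{max}$ to lie in the realistic sub-$\tfrac{\pi}{2}$ range.
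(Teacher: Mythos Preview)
Your route is essentially the paper's: rewrite $\sum_i P_i$ via the rectangular identity as a sum over branches of $(-G_{ik})\bigl(|V_i|^2+|V_k|^2-2|V_i||V_k|\cos\theta_{ik}\bigr)$, maximize each branch term in $\theta_{ik}$ first (monotone on $[0,\pi]$, so $\theta_{ik}=\min[\pi,\theta_{max}]$), then compare the two corner choices for $(|V_i|,|V_k|)$ to locate the critical value $\arccos\frac{V_{max}+V_{min}}{2V_{max}}$. The paper carries out exactly this piecewise comparison after the same substitution (its equations \eqref{rec}--\eqref{equcon}).

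Where you diverge is in bookkeeping, not strategy. The paper sums $P_i$ over \emph{all} buses, slack included, so the susceptance contributions cancel pairwise and no $\sin\theta$ term survives; your separate treatment of the substation-adjacent branches with the residual $t_k$ is more scrupulous about what ``HC'' actually excludes, but the paper simply does not carry that asymmetric piece. Likewise, the paper asserts the odd/even voltage labeling directly, while you justify simultaneous realizability of the per-branch corner optima through tree bipartiteness --- a point the paper leaves implicit. Your closing caveat that the clean statement $\theta_{ik}=\min[\pi,\theta_{max}]$ relies on $\theta_{max}$ staying in the realistic sub-$\tfrac{\pi}{2}$ range (because of the $t_k$ monotonicity) is a refinement the paper does not raise; it sidesteps the issue by never isolating the slack edge.
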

\begin{proof}
When the angle constraint is added, we can still start with $\sum_{i=1}^NP_{i}$ and obtain
\begin{align}
\label{rec}
\sum_{i=1}^NP_{i}=& \sum_{i=1}^{N}\sum_{k=1}^{N}(-G_{ik})\cdot[(V_{i,\textrm{re}}-V_{k,\textrm{re}})^2\\
&+(V_{i,\textrm{im}}-V_{k,\textrm{im}})^2]. \nonumber
\end{align}
The most valuable part is this quadratic form, although, sinusoidal components about angle is introduced. Detailed proof of the derivation of (\ref{rec}) can be found in Appendix B. Subsequently, (\ref{rec}) can be written in the polar coordinates to quantify the impacts of voltage magnitude and phase angle separately. 
\begin{align}
\begin{split}
\sum_{i=1}^NP_{i} = \sum_{i=1}^{N}\sum_{k=1}^{N}&(-G_{ik})\cdot {(V_{i}\cos\theta_{i}-V_{k}\cos\theta_{k})}^2\\
+&(-G_{ik})\cdot{(V_{i}\sin\theta_{i}-V_{k}\sin\theta_{k})}^2.
\end{split}
\end{align}
As the multiplication is the key component in the quadratic optimization, we can simplify the objective further by having 
${|V_{i}|}=a, ~{|V_{k}|}=b, ~\cos\theta_{i}=a_{1},~\sin\theta_{i}=a_{2}, ~\cos\theta_{k}=b_{1},~\sin\theta_{i}=b_{2}$.
Suppose bus $i$ is the even bus and bus $k$ is the odd bus.
Such notation reshape the problem model into
\begin{align}
\max & \sum_{i=1}^n(-G_{ik})\cdot\left[{(aa_{1}-bb_{1})}^2+{(aa_{2}-bb_{2})}^2\right] \label{objtran}\\
\rm{s.t.}  \  &V_{max}\leq a \leq V_{min}, \  \ V_{min}\leq b \leq V_{max},\nonumber\\
& \ a_{1}^2+a_{2}^2=1, \ \  b_{1}^2+b_{2}^2=1.\label{equcon}
\vspace{-2mm}
\end{align}
Substitute the equality constraint \eqref{equcon} into the objective function \eqref{objtran} and we get renewed problem formulation,
\vspace{-2mm}
\begin{align*}
\max  &  \sum_{i=1}^n (-G_{ik})\cdot[a^2+b^2-2ab\cdot\cos\theta_{ik}]\\
\rm{s.t.}  \  &V_{max}\leq a \leq V_{min}, \  \ V_{min}\leq b \leq V_{max}.
\vspace{-2mm}
\end{align*}
In the distribution grids, $G_{ik}$ is mostly negative, making $(-G_{ik})$ positive.
To find the maximum objective value in a non-convex problem, we employ a piecewise analysis of 
$2ab\cdot\cos\theta_{ik}$ according to $\theta$. 
We only show the reasonable upper and lower bounds of the angle difference in the following and the rest can be found in Appendix B.
\begin{itemize}
    \item $\theta \in [-\theta_{max},\theta_{max}]$, 
$\arccos(\frac{V_{max}+V_{min}}{2V_{max}}) \leq \theta_{max}\leq \pi$: \par
With voltage angle constraint, $\cos \theta_{max}\leq \cos\theta_{ik}\leq 1$. As $a$ and $b$ are positive, $\cos \theta_{ik}$ should be as small as possible to get a bigger value of the objective, which is $\cos \theta_{max}$. To maximize $a^2+b^2-2ab\cdot\cos\theta_{max}$, the voltage pattern should be at upper bounds ($a=b=V_{max}$) when $\cos \theta_{max}$ is a negative or small positive number. Until $\theta_{max}$ keeps decreasing to a specific number, the solution pattern changes to ``high-low voltage'' as in Theorem 1. For calculating this critical value, 
\vspace{-2mm}
\begin{align*}
V_{max}^2+V_{max}^2&-2\cdot V_{max} V_{max} \cos\theta_{max}\\
&\leq V_{max}^2+V_{min}^2-2 V_{max} V_{min} \cos\theta_{max},\\
\theta &\leq \arccos(\frac{V_{max}+V_{min}}{2V_{max}}).
\end{align*}
If we plug in the boundary values $V_{max}=1.05$ and $V_{min}=0.95$ below,
\vspace{-2mm}
\begin{align*}
 \vspace{-1mm}
 \theta \leq 0.3098 (rad). 
  \vspace{-4mm}
\end{align*}
 \vspace{-1mm}
Therefore, the solution for this angle range is
$a=b=V_{max}, \ \theta_{ik}=\theta_{max}$.
\item  $\theta \in [-\theta_{max},\theta_{max}]$, $ \theta_{max} < \arccos(\frac{V_{max}+V_{min}}{2V_{max}})$:  As we mention above, the solution pattern flips.
$a=V_{max}, b=V_{min},\  \theta_{ik}=\theta_{max}
$, when voltage angle is limited below the critical value $\arccos(\frac{V_{max}+V_{min}}{2V_{max}})$.
\end{itemize}
\end{proof}
 \vspace{-4mm}
When the voltage angle constraint narrows down, the variation of the results shows that the system acquires power generation prior to the difference of neighbor bus voltage angles. However, in the real systems, the angle difference between the two ends of a $100$km line is approximately $6$ degrees. Therefore, $\theta_{max}$ can be set to be a small value below $0.1047(rad)$ in the distribution grids, under which the second piece is meaningful. 

\subsubsection{HC due to Other Constraints} 
In addition to the voltage constraints (for $|V|$ and $\theta$), we obtain solution restricted to operational constraints. \\
\textbf{Solution Adjustment According to Thermal Limits }

 The thermal limit plays a significant role in two-way power flow.
Therefore, this constraint is represented by current flow limit that results from the heating effects of devices,
$I_{Lmin} \leq I_{ik} \leq I_{Lmax} = -I_{Lmin} = C$. 
We find the thermal limit boundary as inspired by the geometrical understanding and update the HC subject to both constraints.
\begin{theorem}
Consider the following thermal limit in addition to the constraints on top of Theorem $3$: $ I_{Lmin}\leq I_{ik} \leq I_{Lmax}, I_{Lmax}=-I_{Lmin}$. 
Let $I_{ik}^*$ be the branch current flow of previous solution pattern without thermal limit boundary. 
\begin{itemize}
\item If $I_{Lmin}\leq I_{ik}^* \leq I_{Lmax}$ is satisfied, the HC is achieved by the same voltage solution in Theorem $3$; 
\item Else, the HC is achieved by the solution pattern calculated from a three variable cubic equation in Appendix B.
\end{itemize}
\end{theorem}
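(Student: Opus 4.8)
The plan is to layer the thermal limit on top of the feasible set already used in Theorem~3 and to exploit that, on every branch, the current magnitude and the objective contribution are governed by the \emph{same} quadratic form. For a series branch with admittance $y_{ik}$ one has $Y_{ik}=-y_{ik}$, hence $-G_{ik}=\mathrm{Re}(y_{ik})>0$, and from $I_{ik}=y_{ik}(V_i-V_k)$,
\begin{align*}
|I_{ik}|^2=|y_{ik}|^2\,|V_i-V_k|^2=|y_{ik}|^2\bigl(a^2+b^2-2ab\cos\theta_{ik}\bigr),
\end{align*}
with $a=|V_i|,\ b=|V_k|$. Thus $I_{Lmin}\le I_{ik}\le I_{Lmax}$ (with $I_{Lmax}=-I_{Lmin}=C$) is exactly $a^2+b^2-2ab\cos\theta_{ik}\le (C/|y_{ik}|)^2=:D_{ik}^{\max}$, i.e.\ a cap on precisely the branch term $(-G_{ik})(a^2+b^2-2ab\cos\theta_{ik})$ of the objective reformulated in Theorem~3.

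First I would settle the non-binding case. The Theorem~3 optimizer maximizes each branch form $a^2+b^2-2ab\cos\theta_{ik}$ separately (it sits at a box/sector corner), so if its induced currents $I_{ik}^*$ all lie in $[I_{Lmin},I_{Lmax}]$, that point is feasible for the new, strictly smaller feasible set; since adding constraints cannot increase the maximum, it remains optimal --- this is the first bullet.

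For the binding case I would proceed in three steps. \textbf{(i) Activity.} At an optimum of the thermally-constrained problem the overloaded branch's current constraint is tight, $|I_{ik}|=C$: otherwise moving $(a,b,\theta_{ik})$ back toward the Theorem~3 corner strictly increases its branch form without disturbing feasibility of the other branches, contradicting optimality. \textbf{(ii) Reduction.} On every branch not incident to the two end buses of the overloaded branch, the high--low pattern of Theorem~3 is still optimal by the same argument as there; pinning $a^2+b^2-2ab\cos\theta_{ik}=D_{ik}^{\max}$ makes the overloaded branch's contribution the constant $(-G_{ik})D_{ik}^{\max}$ and forces $\cos\theta_{ik}=(a^2+b^2-D_{ik}^{\max})/(2ab)$, so only the magnitudes $a,b$ at the two end buses --- feeding their at most two neighbouring branches, whose far ends are already fixed at $V_{\max}$ or $V_{\min}$ --- remain to be chosen, subject to the induced $\theta_{ik}$ staying admissible. \textbf{(iii) Stationarity.} Writing the Lagrangian of this reduced problem in the three variables $a,b,\theta_{ik}$ with the single equality $a^2+b^2-2ab\cos\theta_{ik}=D_{ik}^{\max}$, the stationarity conditions are quadratic in the variables and affine in the multiplier; eliminating the multiplier collapses them to the three-variable cubic system of Appendix~B, whose finitely many roots are then screened against the voltage-box and angle bounds --- this is the second bullet.

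The main obstacle is step~(ii): showing the forced perturbation stays \emph{local}, i.e.\ that it is never advantageous to also abandon the high--low pattern on branches not incident to the overloaded one (and, when several branches overload simultaneously, that the reductions can be carried out branch by branch). I expect this to follow from the separability of $\sum(-G_{ik})(a^2+b^2-2ab\cos\theta_{ik})$ over a radial topology together with a KKT argument forcing the multipliers of the non-overloaded branches to vanish; the cubic elimination in step~(iii) is then routine.
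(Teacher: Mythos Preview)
Your plan is essentially the paper's argument, built on the same key observation that the branch current magnitude and the branch's contribution to the objective are both proportional to $|V_i-V_k|^2=a^2+b^2-2ab\cos\theta_{ik}$; your non-binding case is identical to the paper's, and your binding case follows the same ``cap the overloaded branch, leave the rest in the high--low pattern'' logic.

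Where you diverge is in what the ``three-variable cubic equation'' actually is. In the paper, it is \emph{not} a KKT system obtained by eliminating a multiplier: Appendix~B simply records the single activity equation
\[
a^2+b^2-2ab\cos\theta_{ik}=C^2/(G_{ik}^2+B_{ik}^2),
\]
calls it a ``three-variable cubic'', and notes (without a formal argument) that altering this branch only disturbs the terms sharing its end-bus variables. Your steps (i)--(iii) --- proving activity, arguing locality, and setting up a Lagrangian whose stationarity conditions collapse to a polynomial system --- are a genuinely more rigorous route that the paper does not take. What your approach buys is an actual justification for why the constraint binds and why the perturbation is local (the obstacle you flag in step~(ii) is real and the paper simply asserts its resolution); what the paper's approach buys is brevity, since it treats the activity equation itself as the answer and leaves the selection among its infinitely many solutions, and the interaction with neighbouring branches, largely implicit.
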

\vspace{-2mm}
We will check if
the solution point in Theorem 3 encounters the violation  boundary or not. 
If not, the solution for HC stays the same. If yes, we will compute the new voltage solutions which is provided in Appendix B.\\
\textbf{Solution Adjustment According to Power Factor Limits}

PV (photovoltaics) generator produces DC power, meaning it injects power at unity power factor. 
Nonetheless, the local load and power supplied by the grid include reactive power, 
and PV inverter is 
also utilized to regulate the power factor by producing reactive power. Reactive power of all PV 
generators must lie within specific bounds, which in other words is to limit the power factor (Appendix B).
Similar to the thermal limit correction, we
substitute previous solution point to check if the inequality is true. If the reactive power is beyond the limit, we can switch bus type  
to fix reactive power at boundary value and compute the new voltage values and HC by Q-V sensitivity \cite{article}. 

Such a constraint is non-convex, but we can still find the theoretically optimal solution.
Other constraints can be analyzed following a similar way.
 \vspace{-2mm}
\section{Practical System Conditions}
In real system operation, our proposed model needs to be accommodated to different scenarios. We present two typical methodologies that integrate with other methods facing practical and operational difficulties.
 \vspace{-4mm}
\subsection{The Multi-phase system with Unbalanced Loads}
\vspace{-1mm}
The single-phase power flow model has a strong assumption that ignores the 
unbalance due to some unbalanced loads and untransposed lines. To provide a
tool that is more analytical and practical, we transfer multi-phase model
to sequence components and decouple them into separate subproblems. After 
that, our 
proposed model can be applied and transfer the results back to the origin \cite{abdel2005improved}.

Let $a$, $b$, and
$c$ denote three phases, the scalar variables of the single phase become vectors. 
$\mathbf{|V_i^{abc}|}=[|V_i^a|,|V_i^b|,|V_i^c|]^T$ and 
$\mathbf{\theta_i^{abc}}=[\theta_i^a,\theta_i^b,\theta_i^c]^T$ denote voltage magnitude and angle at bus $i$.
$\mathbf{P_i^{abc}}=[P_i^a,P_i^b,P_i^c]^T$ denotes the active power and so 
is $\mathbf{Q_i^{abc}}=[Q_i^a,Q_i^b,Q_i^c]^T$. The self and mutual line admittance are
considered, so a single element $Y_{ik}$ is replaced by a 
$3\times 3 $ matrix and then $Y_{Bus}^{abc}\in \mathbb{C^{\mathrm{3}\mathit{n}\times \mathrm{3}\mathit{n}}}$ ($Y_{ik}
\in \{Y^s_{ik},Y^z_{ik}\}$ if considering both series and shunt admittance). For single-phase or two-phase cases, simply set the quantities of the other phases to be zero and the 
model will degrade (e.g., $\mathbf{|V_i^{abc}|}=[|V_i^a|,0,0]$ for single phase.) 

We simplify the analysis of multi-phase system by transferring the model into the the sequence form.
In order to apply the unbalanced situation, we decouple the sequence model while using current injection from negative and zero sequences to represent the unbalance.

According to \cite{abdel2005improved}, the positive-sequence voltage magnitude is much larger under unbalanced condition and the power flow equations are similar to single-phase formulation. 
The equations allow us to implement the proposed method and obtain voltage pattern solutions for positive sequence.
Based on the decoupled models,
we can compute the negative and zero sequence voltages via nodal voltage equations.
Combining with results of positive sequence power flow, it's able to transfer back to the three-phase solution via \eqref{eq:trans} (we provide details in Appendix C).
\vspace{-4mm}
\subsection{The Distributed-Optimization for Complexity Reduction }
\vspace{-1mm}
For a large and complex system, the time on heuristically solving the HC is much longer as too many variables and constraints causing computational difficulties. 
In the following, we show that we can segment the tree structure and compute distributedly. 

If a system is divided into $m$ parts, the red segmentation point shown in 
Fig. 3 links 2 subsystems, it correlates to both parts geometrically. Mathematically in the problem model, the variables are divided into two sets. The first set includes local/private variables $l_{t}$ for subsystem $t$
and the second set includes coupling variables $c_{t}$ which bridge two more subsystems.
This turns the problem into the distributed formulation. 
\vspace{-2mm}
\begin{align}
\begin{split}
    \max &\sum_{t=1}^m  \ HC_{t}(l_{t},c_{t},c'_{t})   \\
\rm{s.t.} \  &c'_{t}=c_{t-1},~
V_{min}\leq |V| \leq V_{max},\\
&\theta_{min}\leq \theta \leq \theta_{max},
I_{Lmin} \leq I_{L} \leq I_{Lmax}.\label{segmodel}
\vspace{-4mm}
\end{split}
\end{align}
\vspace{-4mm}
\begin{figure}[H]
\centering
\vspace{-2mm}
\includegraphics[width=2.5in]{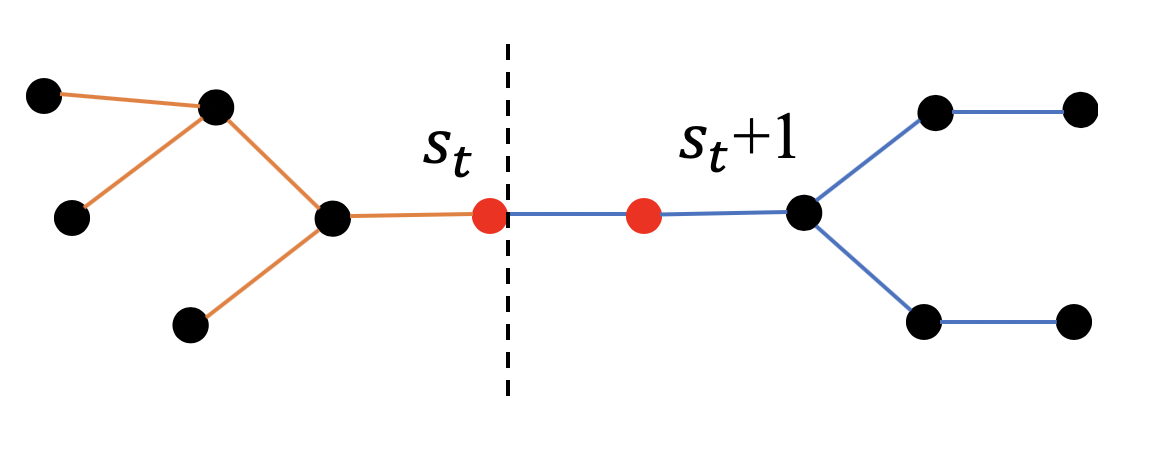}
\label{fig:seg}
\vspace{-3mm}
\caption{The system is segmented into $m$ parts for easier computation. $s_{t}$ is one segment bus linking 2 subsystems.
} 
\vspace{-2mm}
\end{figure}
\vspace{-3mm}
In the model,
$HC_{t}$ means the HC of subsystem $t$; $l_{t}$ represents the local variables - bus voltages that only appear in one subsystem. $c_{t}$ and $c'_{t}$ represent the complicating variables - bus voltages that appear in more than one subsystems. The objective function looses the coupling as it changes from the summation of power injections for all buses to the summation of HC for all subsystems. The coupling shifts to the constraints where the segmentation is true only if $c'_{t}=c_{t-1}$.
For each subsystem $t$, the objective is the same to analyze as \eqref{rec} (the proof is shown in the Appendix D.)
 Thus, it is capable of segmenting the model without disrupting the solution pattern in theorems and parallel computing.
\vspace{-3.5mm}
\section{Numerical Results}
 \vspace{-1mm}
For numerical verification, we use various distribution systems, such as modified IEEE $8$-bus and $123$-bus systems. 
We aim at $1$) validating the hosting capacity solution theory, $2$) evaluating the integrated method
on multi-phase unbalanced system， and
$3$) showing the effectiveness of parallel computation.

We use Matlab and its MatPower package for the demonstration \cite{matpower,matpower2}.
The optimization model is formed as non-linear programming, for which commonly used solvers include Fmincon, Knitro, and Ipopt. By comparing the results from one of the classical solvers with our theoretical solution, we find that their performances are unstable. In other words, they are easily stocked at local optimal especially when the system size is large and the initialization is inappropriate. 
\vspace{-5mm}
\subsection{Hosting Capacity}
\vspace{-1mm}
We use the IEEE $8$-bus system for illustrating HC according to various constraint conditions. Specifically, each bus is modeled with DG as a PV-type bus and we apply the proposed method to calculate HC.
\vspace{-2mm}
  \begin{figure}[H]
  \vspace{-2mm}
 \centering 
 \begin{subfigure}{0.47\textwidth}
 \centering
\includegraphics[width=3.3in]{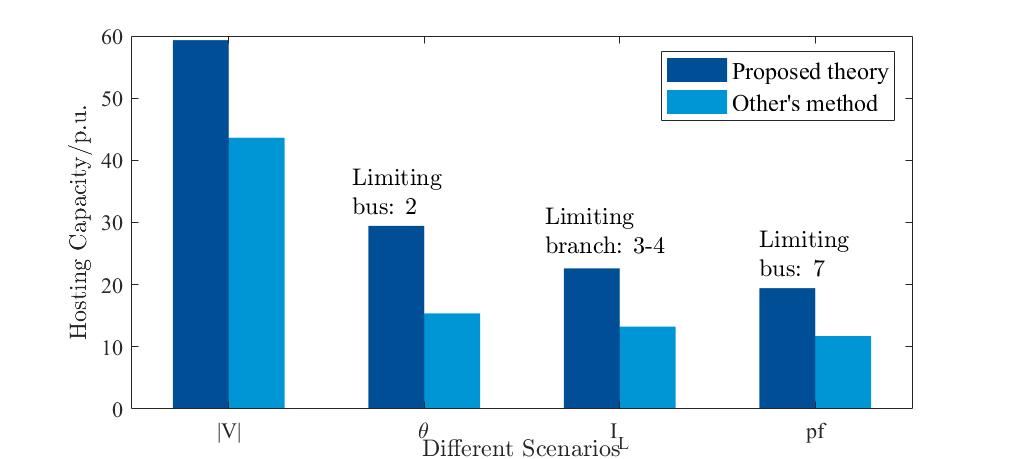}
 \label{fig:hc}
  \vspace{-2mm}
 \caption{The comparison of HC subject to different constraints.}
     \end{subfigure}
 \hfill
     \begin{subfigure}{0.45\textwidth}   
 \centering
 \includegraphics[width=3.2in]{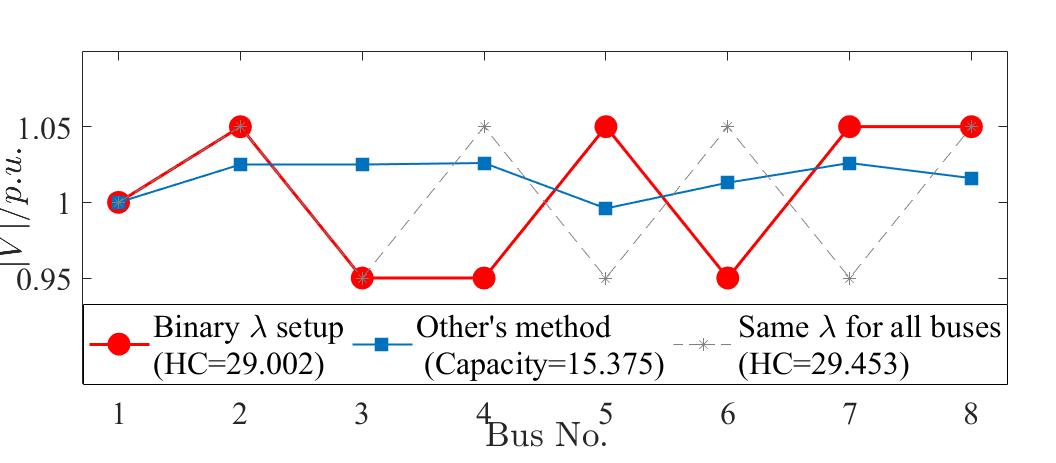}
  \vspace{-3mm}
 \caption{The voltage profiles with different $\lambda_{i}$ sets.}
     \end{subfigure}
     \vspace{-1mm}
         \caption{Numerical results for modified IEEE $8$-bus system.}
\vspace{-3mm}
 \end{figure}
  
Fig. 5a presents hosting capacity subject to different constraints and highlight the limiting bus node. The results of extensive simulation-based method are always smaller than ours. Considering different constraints, the HC from our theory under voltage limits is  $59.33 \ p.u.$, which is relatively large. Resulted from the thermal limit of the branch between node 3 and node 4, HC decreases to $22.63 \ p.u.$

Sometimes, only a few buses are interested or equipped with PV. For example, 
a binary $\lambda_i$ denotes if a bus is to be planned with PV or not. Here we 
set 
bus $2, 5$, and $7$ in the IEEE $8$-bus network, to have PV generators, e.g., $\lambda_{i}=1, i\in \{2,5,7\}$ and $\lambda_{i}=0, i \in \{1,3,4,6,8\}$. The objective turns into $\sum_{i=2,5,7} P_{i}$. 
 
As a result, Fig. 5b shows the corresponding voltage profile that compares to the original setup. As we can see that bus $2$, $5$, and $7$ are automatically assigned high voltages to generate as much as possible due to their 
generator bus type. For comparison, 
we also plot the voltage values by another method in MatPower. It simply runs power flow heuristically and stops at a local optimum or flatten region of the objective. The calculated HC value is $15.38 \ p.u.$, much less than $29.45 \ p.u.$, showing the importance of the closed-form solution proposed in this paper. To show the difference with respect to the first validation when all the $\lambda's$ are the same, we draw the voltage setup in gray in the figure as well. This shows how changing $\lambda_i$ to impact the way to achieve the hosting capacity. 
\newcommand{\tabincell}[2]{\begin{tabular}{@{}#1@{}}#2\end{tabular}}  
\vspace{-4mm}
\subsection{Multi-phase Unbalanced Condition}
\vspace{-5mm}
\begin{figure}[H]
 \centering
 \includegraphics[width=3.2in]{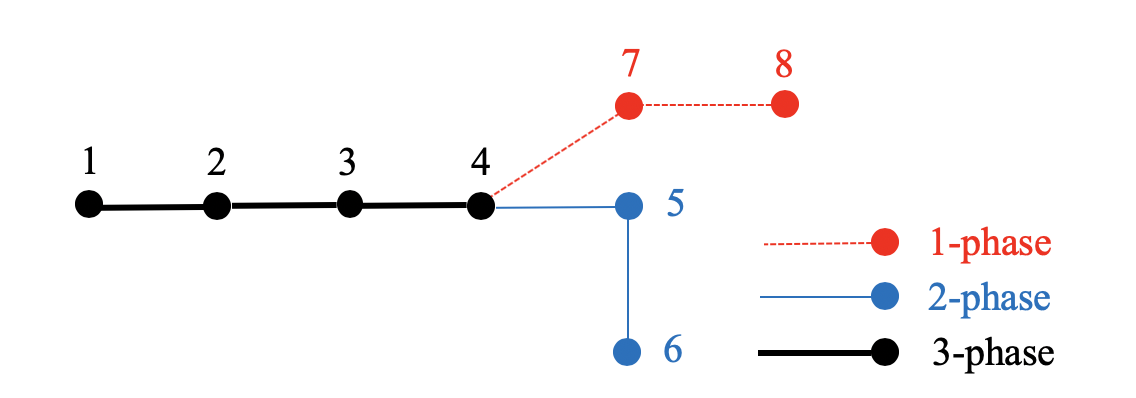}
 \label{fig:hl}
  \vspace{-3mm}
 \caption{The Multi-phase setup in IEEE 8-bus network.}
  \vspace{-2mm}
 \end{figure}
 \vspace{-2mm}
 \begin{table}[H]
\centering
\begin{tabular}{|c|c|c|}
\hline
{Scenario}&{Balance Condition} &{Method}\\
\hline
{All 1-phase}&Well-balanced & HC model \\
\hline
\multirow{2}{*}{All 2-phase }&Well-balanced& HC model \\
\cline{2-3}
&{Unbalanced load }&{Integrate with}  \\
&at bus $4$&{sequence power-flow}  \\
\hline
\multirow{2}{*}{All 3-phase}&Well-balanced&HC model \\
\cline{2-3}
&\multirow{2}{*}{Untransposed lines}& Integrate with  \\
&&  sequence line model  \\
\cline{2-3}
&{Unbalanced load }&{Integrate with}  \\
&at bus $4$&{sequence load current }  \\
\hline
\multirow{4}{*}{Multi-phase}&Well-balanced & HC model\\
\cline{2-3}
&\multirow{2}{*}{Untransposed lines}& Integrate with  \\
&&  sequence line model  \\
\cline{2-3}
&{Unbalanced load }&{Integrate with}  \\
&at bus $4$, $5$, $8$&{sequence load current}  \\
\hline
\end{tabular}
\caption{The evaluation of different scenarios for multi-phase setups. }
\end{table}
 \vspace{-4mm}
To evaluate the multi-phase modeling, we test different scenarios based on IEEE
$8$-bus unbalanced radial network in Fig. 6. The summary of solving methods is shown in Table \uppercase\expandafter{\romannumeral1}. If the system is well-balanced, 
meaning transposed lines and balanced loads, we simply 
apply the original model to compute hosting capacity. When the lines are 
unbalanced, meaning the admittance matrix in phase variables is unsymmetrical, 
we use sequence decoupled line model before applying the proposed method to 
positive sequence. Besides, the loads can also be unbalanced, where the 
load for each phase is individually specified instead of averaging from total 
demand. After the HC calculation via positive sequence power flow, the 
specified current injections represent such unbalance and hence the unbalanced phase voltages can be computed by nodal voltage equations.
 \vspace{-4mm}
\subsection{Parallel Computation}
\vspace{-1mm}
\begin{figure}[H]
    \centering
    \vspace{-1mm}
    \includegraphics[width=3.3in]{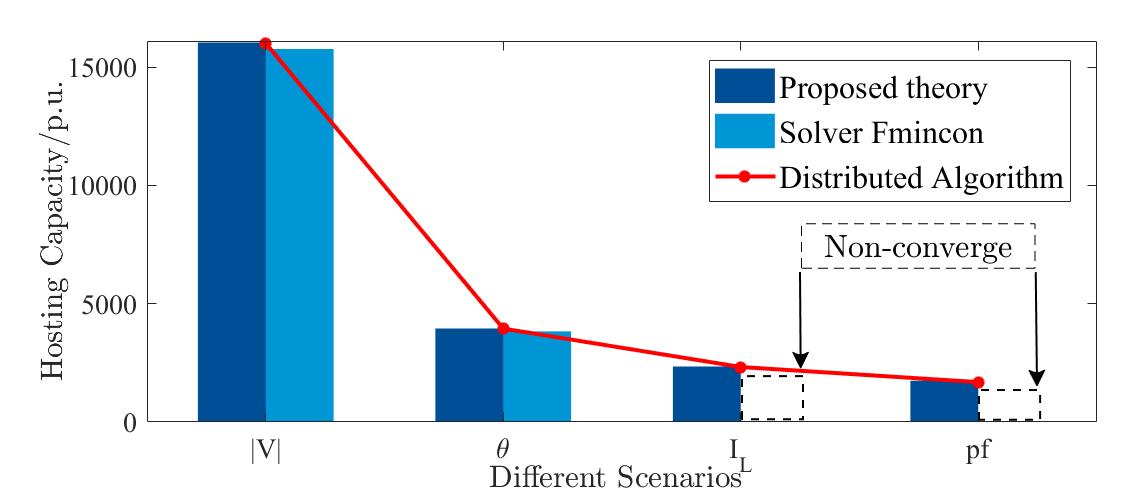}
    \caption{Hosting capacity of IEEE $123$-bus system subject to different constraints.}
\end{figure}
\vspace{-2mm}
Fig. 7 shows numerical results for IEEE $123$-bus system. Similar to the $8$-bus 
system, it shows HC subject to different constraints and compares the theoretical HC 
with the numerically found one by using NLP solver Fmincon for the same model, verifying 
our theorem. 
\begin{figure}[ht]
\vspace{-2mm}
\centering
\includegraphics[width=3.6in]{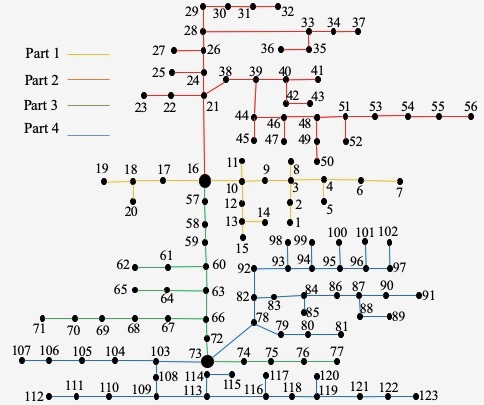}
\caption{The IEEE $123$-bus network structure shows the computational complexity. We implement the distributed optimization model for speed up. The different segmentations are marked with different colors.}
\vspace{-4mm}
\end{figure}

The benchmark time on heuristically solving HC is much longer and the solver is unstable because of the 
 much more complicated structure shown in Fig. 8. 
Too many variables and constraints cause
computational difficulties where the solver's results can be unstable and non-converge.
Applying the distributed optimization model \eqref{segmodel}, the segmentations in 
Fig. 8 are marked with different colors. Key nodes bus-$16$ and bus-$73$ are chosen 
to partition the model.
\vspace{-1mm}
\begin{figure}[H]
    \centering
    \includegraphics[width=3.3in]{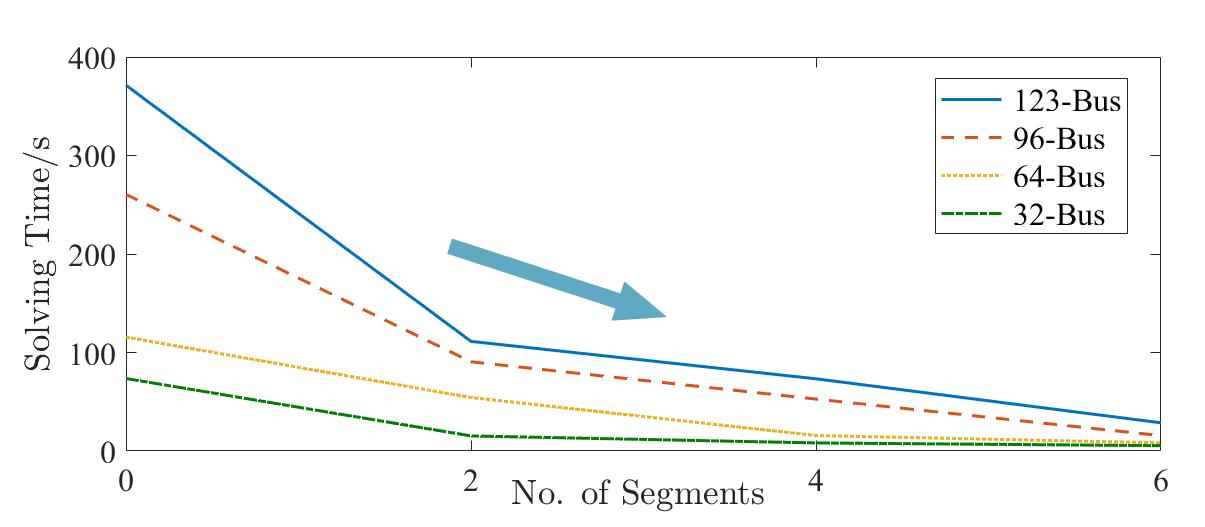}
 \vspace{-2mm}
    \caption{The comparison of solving time for different systems with different segmentations.}
    \vspace{-2mm}
\end{figure}
\vspace{-2mm}

The results with parallel computation are also shown in Fig. 7 as the
solid curve, which is almost the same as the bar plot, verifying the
distributed algorithm. The dramatic time decrease (Fig. 9) 
indicates we are able to reduce computational complexity and improve
efficiency as we calculate all the segments simultaneously.

\vspace{-4mm}
\section{Conclusion}
\vspace{-1mm}
In this paper, we propose a hosting capacity problem formulation for better solvability. A geometric intuition of solving HC motivates us to find the globally optimal solution even with non-convex constraints, with which we reveal the conditions of voltages to achieve the hosting capacity under various operational limits.   
Practical conditions for HC calculation, such as multi-phase unbalance and solving-time complexity, are analyzed via integrated methodologies.
A mathematical proof is provided along with numerical validations to support the  theoretical insights. 
Future work includes introducing more operation limits in a similar fashion and considering more complex modeling. 

\ifCLASSOPTIONcaptionsoff
  \newpage
\fi
\bibliographystyle{IEEEtran}

\appendices
\section{Implement on 3-Bus Toy Example}
For the simple setup to obtain geometric understanding, voltage angles and line reactances are set to zeros. In such a setup, the HC problem model simplifies to
\begin{align}
\begin{split}
\max &\sum_{i=1}^n  P_{i}^{ing}\\
\rm{s.t.} \
 &P_{i}=\sum_{k=1}^n {|V_{i}|}{|V_{k}|} G_{ik},\label{simplified2}\\
&V_{min}\leq |V_{i}| \leq V_{max}.
 \vspace{-3mm}
\end{split}
\end{align}
\section{Hosting Capacity Subject to Constraints}
\subsection{Proof of Theorem 1: Solution Adjustment According to Voltage Magnitude Constraints}
\begin{proof}
For the proof, it is sufficient to show that the proposed hosting value $HC^{*}$ is larger than arbitrary summation of power generation. Specifically, we need to show that 
\vspace{-2mm}
\begin{align}
\sum_{i=1}^n P_{i,\rm{solution}} \geq \sum_{i=1}^n {P_{i}}, \label{proof}
\vspace{-2mm}
\end{align}
where $\sum_{i=1}^n P_{i,\rm{solution}}$ 
represents the solution $HC^{*}$ according to the theorem. 
$\sum_{i=1}^n {P_{i}}$ represents the total power calculated by any setup of voltages within limits.
Expanding the left-hand side of \eqref{proof} with the voltages in the theorem 
\begin{subequations}  \label{eq:5}
\begin{align}
\sum_{i=1}^n P_{i,\rm{solution}}&= n\cdot|G_{ik}|\cdot {(1.05-0.95)}^2 \label{eq:5A}\\
&\geq \sum_{i=1}^n |G_{ik}|\cdot(V_{i}-V_{k})^2 \label{eq:5B}\\
& =\sum_{i=1}^n {P_{i}}, \label{eq:5C} 
\end{align}
\end{subequations}
where \eqref{eq:5A} is the direct plug-in, \eqref{eq:5B} is due to extreme numbers, and \eqref{eq:5C}  is due to power flow equations. 
\end{proof}
\subsection{Proof of Theorem 2: Solution Adjustment According to Angle Difference Constraint}

\begin{proof}
With the general complex impedance setup, there are two types of functionals, namely sinusoids polynomials in the polar coordinate. The multiplication of these two functionals in the power flow equation make the HC optimization  difficult to analyze for the global optimum. 
Therefore, we transform the problem into the rectangular coordinates to reduce the function types into one, namely, the polynomial functions. 
\begin{align}
 \begin{split}
 \max \sum_{i=1}^n P_{i} &  \\
\rm{s.t.} \  P_{i}=&-\sum_{k\in N_{i}} G_{ki}\cdot ( {V_{i,\textrm{real}}}^2+{V_{i,\textrm{imag}}}^2)\\
& +\sum_{k\in N_{i}}(G_{ki}V_{k,\textrm{real}}-B_{ki}V_{k,\textrm{imag}}) \cdot V_{i,\textrm{real}} \\
& +\sum_{k\in N_{i}}(V_{k,\textrm{real}}B_{ki}+V_{k,\textrm{imag}}G_{ki}) \cdot V_{i,\textrm{imag}},\\
V_{min}&\leq |V_{i}| \leq V_{max},\\
\theta_{min}&\leq \theta_{i} \leq \theta_{max},
\end{split}
\end{align}
where bus $k$ is the neighbor bus of bus $i$.
 Extend objective function for quadratic optimization \cite{Y2017geometric19}, we obtain 
\begin{align}
\begin{split}
\sum_{i=1}^NP_{i}=& \sum_{i=1}^{N}\sum_{k=1}^{N}(-G_{ik})\cdot[(V_{i,\textrm{real}}-V_{k,\textrm{real}})^2\\
&+(V_{i,\textrm{imag}}-V_{k,\textrm{imag}})^2] \\
= &\sum_{i=1}^{N}\sum_{k=1}^{N}(-G_{ik})\cdot {(V_{i}\cos\theta_{i}-V_{k}\cos\theta_{k})}^2\\
& +(-G_{ik})\cdot{(V_{i}\sin\theta_{i}-V_{k}\sin\theta_{k})}^2.
\end{split}
\end{align}
As the multiplication is the key component in the quadratic optimization, we can simplify the objective further by having 
${|V_{i}|}=a, ~{|V_{k}|}=b, ~\cos\theta_{i}=a_{1},~\sin\theta_{i}=a_{2}, ~\cos\theta_{k}=b_{1},~\sin\theta_{i}=b_{2}$.
Suppose bus $i$ is the even bus and bus $k$ is the odd bus.
Such notation reshape the objective into
\vspace{-2mm}
\begin{align}
\max & \sum_{i=1}^n(-G_{ik})\cdot\left[{(aa_{1}-bb_{1})}^2+{(aa_{2}-bb_{2})}^2\right] \label{proof2}\\
\rm{s.t.} & \  0.95\leq a \leq 1.05, \  \ 0.95\leq b \leq 1.05,\label{proof3}\\
& \ a_{1}^2+a_{2}^2=1, \ \  b_{1}^2+b_{2}^2=1.\label{proof4}
\vspace{-2mm}
\end{align}
Substitute the equality constraint \eqref{proof4} into the objective function \eqref{proof2} and we get renewed problem formulation,
\vspace{-2mm}
\begin{align}
\begin{split}
\max  &  \sum_{i=1}^n (-G_{ik})\cdot[a^2+b^2-2ab\cdot\cos\theta_{ik}]\\
\rm{s.t.}  \  &0.95\leq a \leq 1.05, \  \ 0.95\leq b \leq 1.05.
\vspace{-2mm}
\end{split}
\end{align}

In the distribution grids, $G_{ik}$ is mostly negative, making $-G_{ik}$ positive.
To find the maximum HC, we show in the following that we can find the global optimum in a non-convex problem by piecewise analysis of 
$2ab\cdot\cos\theta_{ik}$ according to $\theta$. 
\vspace{-1mm}
\begin{enumerate}
\item \textbf{When $\theta \in [0,\theta_{max}]$, for a fixed $\theta_{max}\in[\pi, 2\pi]$:} \par$\theta \in [0,\theta_{max}]$ means that 
$-1\leq \cos\theta_{ik} \leq 1.$
As $a$ and $b$ are positive voltage magnitudes, we obtain the maximal HC, when $\cos\theta_{ik}=-1$. Therefore, one element of the objective function becomes
$a^2+b^2-2ab\cdot\cos\theta_{ik}=(a+b)^2$. The maximum is achieved when
$ a=b=1.05$, $\theta_{ik}=\pi$,
meaning hosting capacity is under the condition that all the bus voltage magnitudes are the highest and the difference between neighbor bus voltage angles is $\pi$. Therefore, we need to change the voltage angle of $\pi$ within one branch interval, impossible in the real system. Meanwhile, the unconstrained voltage angle is unstable with observation from several attempts, so the next step is to narrow down the voltage angle.
\item \textbf{When $\theta \in [0,\theta_{max}]$, where 
$\arccos(\frac{V_{u}+V_{l}}{2V_{u}}) \leq \theta_{max}\leq \pi$:} \par
With voltage angle constraint, $\cos \theta_{max}\leq \cos\theta_{ik}\leq 1$. As $a$ and $b$ are positive, $\cos \theta_{ik}$ should be as small as possible to get a bigger value of the objective, which is $\cos \theta_{max}$. To maximize $a^2+b^2-2ab\cdot\cos\theta_{max}$, the voltage pattern is the same as above ($a=b=1.05$), when $\cos \theta_{max}$ is a negative or small positive number. Until $\theta_{max}$ keeps decreasing to a specific number, the solution pattern changes to ``high-low voltage'' as in Theorem 1. For calculating this critical value, 
\vspace{-2mm}
\begin{align}
\begin{split}
    V_u^2+V_u^2-2\cdot V_u V_u \cos\theta_{max}
&\leq V_u^2+V_l^2-2 V_u V_l \cos\theta_{max},\\
\theta &\leq \arccos(\frac{V_{u}+V_{l}}{2V_{u}}), 
\end{split}
\end{align}
where $V_u$ and $V_l$ are the upper and lower bounds.
we plug in the boundary values of $a$ and $b$ below,
\vspace{-2mm}
\begin{align}
 \theta \leq 0.3098. 
 \vspace{-4mm}
\end{align}
Therefore, the solution for this angle range is
$a=b=1.05,  \theta_{ik}=\theta_{max}$.
\item \textbf{When $\theta \in [0,\theta_{max}]$, where $ \theta_{max} < \arccos(\frac{V_{u}+V_{l}}{2V_{u}})$:} 

\par As we mention above, the solution is
$a=1.05, b=0.95, \theta_{ik}=\theta_{max}
$, when voltage angle is limited below the critical value $\arccos(\frac{V_{u}+V_{l}}{2V_{u}})$. The ``high-low voltage'' pattern appears again.
\end{enumerate}
\end{proof}
\subsection{Proof of Theorem 3: Solution Adjustment According to Thermal Limit}
In addition to the voltage constraints (for $|V|$ and $\theta$), the thermal limit is rather important in two-way power flow.
We use current flow limit to represent this constraint that results from the heating effects of devices,
$I_{Lmin} \leq I_{ik} \leq I_{Lmax} = -I_{Lmin} = C$. We do a comparison of the previous solution and the thermal limit to estimate if it has the impact on the final solution.
\begin{proof}
We will plug in the voltage solutions from Theorem $3$ to check if the new bounds are violated or not. If not, the solution for HC stays the same. If yes, we will find the new voltage solutions. 

Specifically, we plug in the voltage solutions into the current flow below
\vspace{-2mm}
\begin{align}
I_{ik}=|G_{ik}+jB_{ik}| \cdot (V_{i}-V_{k}).\label{eq:7}
\vspace{-2mm}
\end{align}

Now, compare the calculated current to the new boundary constraints. 
\begin{itemize}
\item If $-C\leq I_{ik}^* \leq C$, meaning the thermal limit has no impact on the solution;
\item Otherwise, 
we plug in \eqref{eq:7} into \eqref{eq:2e} to find the solution.
\begin{align}
\begin{split}
|G_{ik}+jB_{ik}|^2\cdot|(V_{i,\textrm{real}}-V_{k,\textrm{real}})+j(V_{i,\textrm{imag}}&-V_{k,\textrm{imag}})|^2\\
&\leq C^2,
\end{split}
\end{align}
\vspace{-1mm}
equivalent to
\vspace{-1mm}
\begin{align}
\begin{split}
[(V_{i,\textrm{real}}-V_{k,\textrm{real}})^2+j(V_{i,\textrm{imag}}&-V_{k,\textrm{imag}})^2]\\
&\leq C^2/(G_{ik}^2+B_{ik}^2).
\end{split}
\end{align}
The quadratic functions on the left-hand side are exactly one component in (6) to represent the objective $\sum_{i=1}^n P_{i}$. Therefore, the power flow impacted by thermal limit is 
\begin{align}
P_{ik}^*= (-G_{ik})\cdot[(V_{i,\textrm{real}}-V_{k,\textrm{real}})^2+(V_{i,\textrm{imag}}-V_{k,\textrm{imag}})^2].\label{eq:13}
\end{align}
We calculate the maximum value that this component can achieve via
\vspace{-2mm}
\begin{align}
{P_{ik}^*}_{max}= (-G_{ik})\cdot C^2/(G_{ik}^2+B_{ik}^2).
\end{align}
To obtain the solution pattern (values of $|V_{i}|$, $|V_{k}|$ and $\theta_{ik}$), we solve this three-variable-cubic equation.
\begin{align}
a^2+b^2-2ab\cdot\cos\theta_{ik}= C^2/(G_{ik}^2+B_{ik}^2), 
\label{eqn:three}
\end{align}
where $a$ and $b$ represent $|V_{i}|$ and $|V_{k}|$.

Due to the coupling of variables in the power flow equations, the changing solution pattern of one branch can have the impact on the rest buses to achieve maximum generation. The voltage pattern might transpose from ``high-low'' to ``low-high''.
According to the previous theorems and proofs, the ``high-low voltage'' pattern in one branch is obtained because of the quadratic function type in \eqref{eq:13}. 
Therefore, such transpose will not change the part of HC uncorrelated to this branch except the buses connected to it. In other words, only the quadratic functions containing the involved variables follow this change.
\end{itemize}
\end{proof}
\subsection{Proof of Solution Adjustment According to Power Factor Constraint}
Limiting Reactive power of PV generator is to limit the power factor \eqref{eq:2f}.
\begin{equation}
  \frac{P}{|S|} \geq \eta,
\end{equation}
Suppose we use $\eta=0.95$ as the operation limit,
\begin{align}
    \begin{split}
        \frac{P^2}{P^2+Q^2}&\geq \eta^2,\\
        Q^2&\leq \frac{1-\eta^2}{\eta^2}P^2,\\
        Q &\leq \frac{\sqrt{1-\eta^2}}{\sqrt{\eta^2}}P \ for \ Q>0,\\
         Q &\geq -\frac{\sqrt{1-\eta^2}}{\sqrt{\eta^2}}P \ for \ Q<0,
    \end{split}
\end{align}

\section{Extend to Three-phase Unbalanced Network}

We simplify the analysis of multi-phase system by transferring the model into sequence form via symmetrical component transformation matrix.
\begin{equation}
T={
\left[ \begin{array}{ccc}
1 & 1 & 1\\
1 & \alpha^2 & \alpha\\
1 & \alpha & \alpha^2
\end{array} 
\right ]} \ \text{where} \ \alpha=1\angle120^{\circ},
\end{equation}
a set of sequence variables is obtained, i.e.
\begin{equation}
    \mathbf{|V_i^{012}|}=[|V_i^0|,|V_i^1|,|V_i^2|]^T=T^{-1}\mathbf{|V_i^{abc}|}.\label{eq:trans}
\end{equation}
The sequence admittance matrix is
\begin{equation}
Y^{012}=T^{-1}Y^{abc}T. 
\end{equation}
In order to apply the unbalanced situation, we decouple the sequence model while using current injection from negative and zero sequence to represent the unbalance.
\begin{align}
    I^m=\big (\frac{S^{load}}{V^m}\big )^*+\sum_{k\in \mathcal{N}} \Delta I_k^m, \ m=0 \ \text{or} \ 2.
\end{align}
According to \cite{abdel2005improved}, the positive sequence voltage magnitude is much larger under unbalanced condition and the power flow is similar to single-phase formulation. 
Therefore, the positive sequence power flow equations are given as
\begin{subequations}  
\begin{align}
P_{i}^{1}=\sum_{k=1}^n {|V_{i}^{1}|}{|V_{k}^{1}|^T}(G_{ik}^{1}\cos\theta_{ik}^{1}+B_{ik}^{1}\sin\theta_{ik}^{1}), 
\\
Q_{i}^{1}=\sum_{k=1}^n {|V_{i}^{1}|}{|V_{k}^{1}|^T}(G_{ik}^{1}\cos\theta_{ik}^{1}-B_{ik}^{1}\sin\theta_{ik}^{1}).
\end{align}
\end{subequations}
The equations allow us to implement the proposed method and obtain voltage pattern solutions for positive sequence.
Based on the decoupled models,
we can compute the negative and zero sequence voltages via nodal voltage equations. 
\begin{align}
    Y^mV^m=I^m,\ m=0 \ \text{or} \ 2.
\end{align}
Combining with results of positive sequence power flow, it's able to transfer back to three phase solution via \eqref{eq:trans}.

\section{Enable Parallel Computation for Large Scale}

Following the distributed optimization model in  \uppercase\expandafter{\romannumeral4}. B, we derive the objective
\vspace{-1mm}
\begin{align}
\begin{split}
    \mathit{HC_{t}}=&\sum_{i=s_{t-1}}^{s_{t}}\sum_{k=s_{t-1}}^{s_{t}}(-G_{ik})\cdot[(V_{i,\textrm{real}}-V_{k,\textrm{real}})^2 \\
&+(V_{i,\textrm{imag}}-V_{k,\textrm{imag}})^2] \ (s_{0}=1)\nonumber\\
=& \sum_{i=s_{j-1}}^{s_{j}}\sum_{k=s_{j-1}}^{s_{j}}(-G_{ik})\cdot {(V_{i}\cos\theta_{i}-V_{k}\cos\theta_{k})}^2\\
& +(-G_{ik})\cdot{(V_{i}\sin\theta_{i}-V_{k}\sin\theta_{k})}^2,
\end{split}
\end{align}
which shows the same pattern as (6). Thus, The capability of parallel computation with the proposed model is proved.

\end{document}